\documentclass[10pt,conference]{IEEEtran}

\usepackage{times}
\usepackage{amsmath,dsfont}
\usepackage{amssymb,amsthm}
\usepackage{epsfig,verbatim}
\usepackage{mathrsfs}
\usepackage{verbatim}
\usepackage{syntonly}
\usepackage{graphicx}
\usepackage{epstopdf}
\usepackage{multirow}
\usepackage{subfigure}



\theoremstyle{remark} \newtheorem{theorem}{Theorem}
\theoremstyle{remark} 
\newtheorem{proposition}{Proposition}
\newtheorem*{propositionA}{Proposition}

\theoremstyle{remark} \newtheorem{remark}{Remark}

\newcommand{\eps}{\epsilon}

\newcommand{\styp}{A_{\epsilon}^{*(n)}}

\newcommand{\mX}{\mathcal{X}}

\newcommand{\mY}{\mathcal{Y}}
\newcommand{\mW}{\mathcal{W}}
\newcommand{\mV}{\mathcal{V}}
\newcommand{\mS}{\mathcal{S}}

\newcommand{\yvec}{\mathbf{y}}
\newcommand{\xvec}{\mathbf{x}}

\newcommand{\svec}{\mathbf{s}}

\newcommand{\avec}{\mathbf{a}}

\newcommand{\wvec}{\mathbf{w}}

\newcommand{\tsvec}{\tilde{\svec}}
\newcommand{\hsvec}{\hat{\svec}}

\newcommand{\msg}{u}
\newcommand{\msgBig}{\MakeUppercase{\msg}}
\newcommand{\msgCal}{\mathcal{\msgBig}}
\newcommand{\tmsg}{\tilde{\msg}}
\newcommand{\hmsg}{\hat{\msg}}

\long\def\symbolfootnote[#1]#2{\begingroup\def\thefootnote{\fnsymbol{footnote}}\footnote[#1]{#2}\endgroup}

\setlength{\arraycolsep}{1pt}

\newcommand{\thmlabel}[1]{\label{thm:#1}}
\newcommand{\thmref}[1]{\ref{thm:#1}}
\newcommand{\Thmref}[1]{Thm.~\thmref{#1}}

 \IEEEoverridecommandlockouts
\title{Joint Source-Channel Coding for the Multiple-Access Relay Channel
\thanks{
\noindent
This work was partially supported by the European Commission's Marie Curie IRG Fellowship 
PIRG05-GA-2009-246657 under the Seventh Framework Programme. Deniz G\"{u}nd\"{u}z is partially supported by the European Commission's Marie Curie Fellowship IRG256410, and by the Spanish Government under project TEC2010-17816 (JUNTOS).
}
}


\author{
\IEEEauthorblockN{Yonathan Murin, Ron Dabora}
\authorblockA{Department of Electrical and Computer Engineering \\
Ben-Gurion University, Israel\\
Email: moriny@bgu.ac.il, ron@ee.bgu.ac.il}

\vspace{-0.75cm}

\and
\IEEEauthorblockN{Deniz G\"und\"uz}
\authorblockA{Centre Tecnologic de Telecomunicacions  \\
de Catalunya (CTTC), Barcelona, Spain \\
Email: deniz.gunduz@cttc.es}

\vspace{-0.75cm}

}

\vspace{-0.6cm}

\begin{document}

\maketitle

\thispagestyle{empty} \pagestyle{empty}


\vspace{-0.2cm}

\begin{abstract}
    Reliable transmission of arbitrarily correlated sources over multiple-access relay channels (MARCs) and multiple-access broadcast relay channels (MABRCs) is considered. In MARCs, only the destination is interested in a reconstruction of the sources, while in MABRCs, both the relay and the destination want to reconstruct the sources. We allow an arbitrary correlation among the sources at the transmitters, and let both the relay and the destination have side information that are correlated with the sources. 
    Two joint source-channel coding schemes are presented and the corresponding sets of sufficient conditions for reliable communication are derived. The proposed schemes use a combination of the correlation preserving mapping (CPM) technique with Slepian-Wolf (SW) source coding: the first scheme uses CPM for encoding information to the relay and SW source coding for encoding information to the destination; while the second scheme uses SW source coding for encoding information to the relay and CPM for encoding information to the destination. 
\end{abstract}

\vspace{-0.3cm}

\section{Introduction}

\vspace{-0.1cm}

The multiple-access relay channel (MARC) models a network in which several users communicate with a single destination, with the help of a relay \cite{Kramer:2005}. Examples of such a network include sensor and ad-hoc networks in which an intermediate relay node is introduced to assist the communication from the source terminals to the destination. The MARC is a fundamental multi-terminal channel model that generalizes both the multiple-access channel (MAC) and the relay channel models, and has received a lot of attention in the recent years. If the relay terminal is also required to decode the source messages, the model is called the multiple-access broadcast relay channel (MABRC).

While in \cite{Kramer:2005},\cite{Sankar:07} MARCs with independent sources at the terminals are considered, in \cite{Murin:ISWCS11}, \cite{Murin:IT11} we allow arbitrary correlation among the sources to be transmitted to the destination in a lossless fashion. 
We also let the relay and the destination have side information that are correlated with the two sources.
In \cite{Murin:ISWCS11} we address the problem of determining whether a pair of sources can be losslessly transmitted to the destination with a given number of channel uses per source sample, using statistically independent source code and channel code.

In \cite{Shannon:48} Shannon showed that a source can be reliably transmitted over a memoryless point-to-point (PtP) channel, if and only if its entropy is less than the channel capacity. Hence, a simple comparison of the rates of the optimal source code and the optimal channel code for the respective source and channel, suffices to conclude whether reliable communication is feasible. This is called the {\it separation theorem}. 
However, the optimality of separation does not generalize to multiuser networks \cite{Cover:80}, \cite{GunduzErkip:09}, and, in general the source and channel codes must be jointly designed for every particular combination of source and channel, for optimal performance.

In this paper we study {\em source-channel coding for the transmission of correlated sources over MARCs}.
We note that while the capacity region of the MAC (which is a special case of the MARC) is known for independent messages, the optimal joint source-channel code for the case of correlated sources is not known in general \cite{Cover:80}. 
%
%
Single-letter sufficient conditions for communicating discrete, arbitrarily correlated sources over a MAC are derived in \cite{Cover:80}. These conditions were later shown by Dueck in \cite{Dueck:81}, to be sufficient but not necessary. This gives an indication on the complexity of the problem studied in the present work.

The main technique used in \cite{Cover:80} is the {\em correlation preserving mapping (CPM)} in which the channel codewords are correlated with the source sequences. Since the source sequences are correlated with each other, CPM leads to correlation between the channel codewords. 
The CPM technique of \cite{Cover:80} is extended to source coding with side information for the MAC in \cite{Ahlswede:83} and to broadcast channels with correlated sources in \cite{HanCosta:87}. 
Transmission of arbitrarily correlated sources over interference channels (ICs) is studied in \cite{LiuChen:2011}, in which Liu and Chen apply the CPM technique to ICs.
Lossless transmission over a relay channel with correlated side information is studied in \cite{ErkipGunduz:07} and \cite{ElGamalCioffi:07}. In \cite{ErkipGunduz:07} a decode-and-forward (DF) based achievability scheme is proposed and it is shown that separation is optimal for physically degraded relay channels with degraded side information, as well as for cooperative relay-broadcast channels with arbitrary side information. 
Necessary and sufficient conditions for reliable transmission of a source over a relay channel, when side information is available at the receiver or at the relay, are established in \cite{ElGamalCioffi:07}.

\vspace{-0.2cm}

\subsection*{Main Contributions}

\vspace{-0.15cm}

In this paper we first demonstrate the suboptimality of separate source and channel encoding for the MARC by considering the transmission of correlated sources over a discrete memoryless (DM) semi-orthogonal MARC in which the relay-destination link is orthogonal to the channel from the sources to the relay and the destination.

Next, we propose two DF-based joint source-channel achievability schemes for MARCs and MABRCs.
 Both proposed schemes use a combination of SW source coding and the CPM technique. While in the first scheme CPM is used for encoding information to the relay and SW source coding is used for encoding information to the destination; in the second scheme SW source coding is used for encoding information to the relay and CPM is used for encoding information to the destination. A comparison of the conditions of the two schemes reveals a tradeoff: while the relay feasibility conditions of the former are looser, the destination feasibility conditions of the latter are looser. These are {\em the first joint source-channel achievability schemes, proposed for a multiuser network with a relay, which take advantage of the CPM technique.}

The rest of this paper is organized as follows: in Section \ref{sec:NotationModel} we introduce the system model and notations. In Section \ref{subsec:exampleSepSubOpt} we demonstrate the suboptimality of separate encoding for the MARC. In Section \ref{sec:JointAchiev} we present two achievability schemes for DM MARCs and MABRCs with correlated sources and side information, and derive their corresponding sets of feasibility conditions. We discuss the results in Section \ref{sec:jointDiscussion}, and conclude the paper in Section \ref{sec:conclusions}.

%

\vspace{-0.1cm}

\section{Notations and System Model} \label{sec:NotationModel}
\vspace{-0.08cm}

In the following we denote random variables with upper case letters, e.g. $X$, and their realizations with lower case letters, e.g. $x$. 
A discrete random variable $X$ takes values in a set $\mX$. 
We use $p_X(x)\equiv p(x)$ to denote the probability mass function (p.m.f.) of a discrete RV $X$ on $\mX$.
We denote vectors with boldface letters, e.g. $\xvec$; 
the $i$'th element of a vector $\xvec$ is denoted by $x_i$, and we use $\xvec_i^j$ where $i<j$ to denote $(x_i, x_{i+1},...,x_{j-1},x_j)$; $x^j$ is a short form notation for $x_1^j$.
We use $\styp(X)$ to denote the set of $\eps$-strongly typical sequences w.r.t. the p.m.f $p_X(x)$ on $\mX$, as defined in \cite[Ch. 13.6]{cover-thomas:it-book}. When referring to a typical set we may omit the random variables from the notation, when these variables are clear from the context. 
The empty set is denoted by $\phi$.


The MARC consists of two transmitters (sources), a receiver (destination) and a relay.
Transmitter $i$ observes to the source sequence $S_i^n$, for $i=1,2$.
The receiver is interested in the lossless reconstruction of both source sequences observed by the two transmitters.
The objective of the relay is to help the receiver decode these sequences.
Let $W_3^n$ and $W^n$, denote the side information at the relay and at the receiver respectively. The side information sequences are correlated with the source sequences.
For the MABRC both the receiver and the relay are interested in a lossless reconstruction of both source sequences.
Figure \ref{fig:MABRCsideInfo} depicts the MABRC with side information setup. 
\begin{figure}[h]
    \centering
    \scalebox{0.39}{\includegraphics{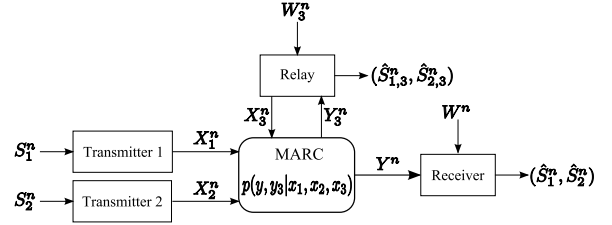}}
    \vspace{-0.2cm}
    \caption{Multiple-access broadcast relay channel with correlated side information. 
    $(\hat{S}^n_{1,3}, \hat{S}^n_{2,3})$ are the reconstructions of $(S^n_{1}, S^n_{2})$ at the relay, and $(\hat{S}^n_{1}, \hat{S}^n_{2})$ are the reconstructions at the destination.}
    \label{fig:MABRCsideInfo}
    \vspace{-0.6cm}
\end{figure}

The sources and the side information sequences, $\{ S_{1,k},S_{2,k},W_{k},W_{3,k} \}_{k=1}^{n}$, are
arbitrarily correlated according to a joint distribution $p(s_1,s_2,w,w_3)$ over a
finite alphabet $\mS_1 \times \mS_2 \times \mW \times \mW_3$, and independent across different sample indices $k$.
All nodes know this joint distribution.

For transmission, a discrete memoryless MARC with inputs $X_1, X_2, X_3$ over finite input alphabets $\mX_1,\mX_2,\mX_3$, and outputs $Y, Y_3$ over finite output alphabets $\mY,\mY_3$, is available.
The MARC is memoryless in the sense
\vspace{-0.15cm}
\begin{equation}
	p(y_{k},y_{3,k}|y^{k-1} \mspace{-1.3mu} ,y_3^{k-1} \mspace{-1.3mu} ,x_1^k,x_2^k,x_3^k) \mspace{-3mu} = \mspace{-3mu} p(y_{k},y_{3,k}|x_{1,k},x_{2,k},x_{3,k}). \nonumber
\label{eq:MARCchanDist}
\end{equation}

\vspace{-0.15cm}

A source-channel code for MABRCs with correlated side information consists of two encoding functions at the transmitters:
$f_i^{(n)} : \mS_i^n \mapsto \mX_i^n, i=1,2$, a decoding function at the destination,
$g^{(n)}: \mY^n \times \mW^n \mapsto \mS_1^n \times \mS_2^n$, and a decoding function at the relay, $g_3^{(n)}: \mY_3^n \times \mW_3^n \mapsto \mS_1^n \times \mS_2^n$.
Finally, there is a causal encoding function at the relay, $x_{3,k} = f_{3,k}^{(n)}(y_{3,1}^{k-1},w_{3,1}^n), 1 \leq k \leq n$.
Note that in the MARC scenario the decoding function $g_3^{(n)}$ does not exist.
Let $\hat{S}_i^n$ and $\hat{S}_{i,3}^n$ denote the reconstruction of $S_i^n, i=1,2,$ at the receiver and at the relay respectively.
The average probability of error of a source-channel code for the 
MABRC is defined as $P_e^{(n)} \triangleq  \Pr\big((\hat{S}_1^n,\hat{S}_2^) \neq (S_1^n,S_2^n) \mbox{ or } (\hat{S}_{1,3}^n,\hat{S}_{2,3}^n) \neq (S_1^n,S_2^n) \big)$. For the MARC the definition is similar except that the decoding error event at the relay is omitted.
%
The sources $(S_1,S_2)$ can be \emph{reliably transmitted} over the MABRC with side information if there exists a sequence of source-channel codes such that $P_e^{(n)} \rightarrow 0$ as $n \rightarrow \infty$.
The same definition applies to MARCs.

Before presenting the new joint source-channel coding schemes, we first motivate this work by demonstrating the suboptimality of separate encoding for the MARC.

\vspace{-0.15cm}
\section{Suboptimality of Separation for DM MARCs} 
 \label{subsec:exampleSepSubOpt}

 
Consider the transmission of arbitrarily correlated sources $S_1$ and $S_2$ over a DM semi-orthogonal MARC (SOMARC), in which the relay-destination link is orthogonal to the channel from the sources to the relay and to the destination.
The SOMARC is characterized by the joint distribution $p(y_R,y_S,y_3|x_1,x_2,x_3)=p(y_R|x_3)p(y_S,y_3|x_1,x_2)$, where $Y_R$ and $Y_S$ are the channel outpus at the destination. The SOMARC is depicted in Figure \ref{fig:SOMARC}. 
In the following we present a scenario (sources and a channel) in which joint source-channel coding strictly outperforms separate source-channel coding.

We begin with an outer bound on the sum-capacity of the SOMARC. This is characterized in Proposition \ref{prop:ExmpCahnnel}.
\begin{figure}[h]
    \vspace{-0.2cm}
    \centering
    \scalebox{0.40}{\includegraphics{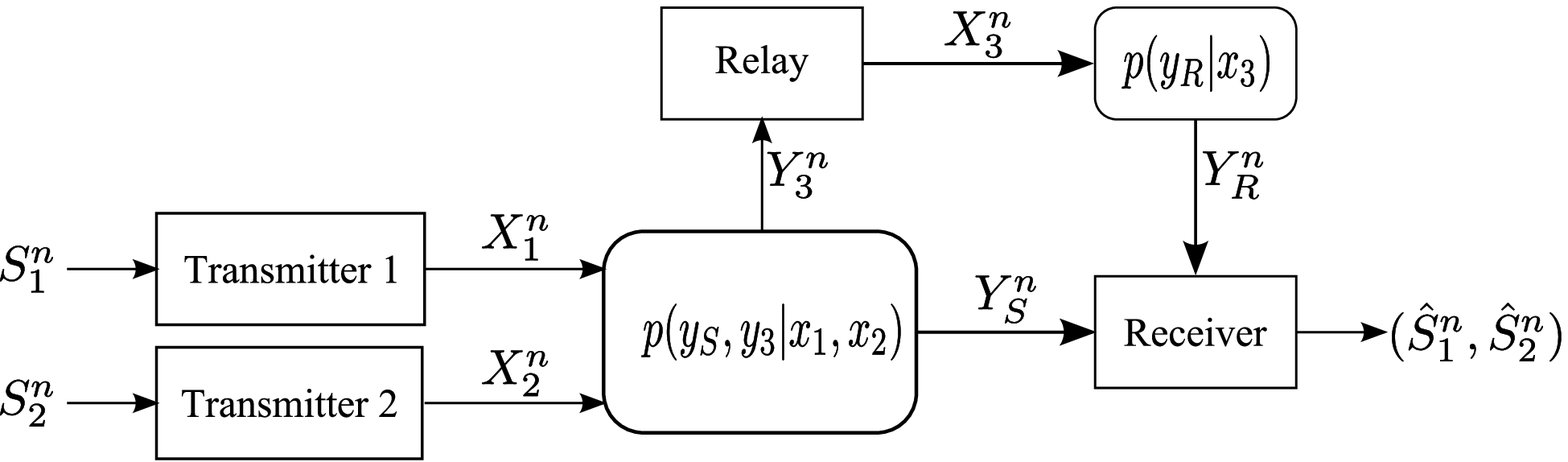}}
    \vspace{-0.15cm}
    \caption{Semi-orthogonal multiple-access relay channel.}
    \label{fig:SOMARC}
    \vspace{-0.5cm}
\end{figure}

\begin{proposition} \label{prop:ExmpCahnnel}
    The sum-capacity of the SOMARC, $R_1+R_2$, is upper bounded by
    \vspace{-0.2cm}
    \begin{align}
        R_1+R_2 \leq \max_{p(x_1)p(x_2)p(x_3)} & \min \big\{ I(X_1, X_2; Y_3,Y_S), \nonumber \\
        & I(X_3; Y_R) + I(X_1,X_2;Y_S) \big\}.
    \label{eq:SOMARC_Bound}
    \end{align}
    \vspace{-0.5cm}
\end{proposition}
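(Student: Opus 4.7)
The plan is to establish this bound by combining Fano's inequality with the standard cut-set technique, adapted to the semi-orthogonal channel factorization $p(y_R,y_S,y_3|x_1,x_2,x_3) = p(y_R|x_3)\,p(y_S,y_3|x_1,x_2)$. The two terms inside the minimum correspond to two natural cuts of the MARC: one isolates the two sources on one side, so that $\{X_1,X_2\}$ transmit across to $\{$relay, destination$\}$; the other places the sources together with the relay on one side, so the destination is alone on the other.

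For the first cut I would start from $n(R_1+R_2) \leq I(W_1,W_2;Y_R^n,Y_S^n) + n\epsilon_n$, enlarge the output to $(Y_R^n,Y_S^n,Y_3^n)$ (which only increases the mutual information), and then follow the standard MARC cut-set single-letterization, conditioning at each time step on $X_{3,i}$ (a deterministic function of $Y_3^{i-1}$) and introducing $X_{1,i},X_{2,i}$ via the fact that they are functions of $W_1,W_2$. This yields $\sum_{i=1}^{n} I(X_{1,i},X_{2,i};Y_{R,i},Y_{S,i},Y_{3,i}|X_{3,i})$. Invoking the semi-orthogonality, $Y_R$ is conditionally independent of $(X_1,X_2,Y_S,Y_3)$ given $X_3$, so the $Y_R$ terms drop out, giving $\sum_{i=1}^n I(X_{1,i},X_{2,i};Y_{S,i},Y_{3,i}|X_{3,i})$; dropping the conditioning on $X_{3,i}$ (which only weakens the bound) produces $\sum_{i=1}^n I(X_{1,i},X_{2,i};Y_{S,i},Y_{3,i})$.

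For the second cut I would bound $I(W_1,W_2;Y_R^n,Y_S^n) \leq I(X_1^n,X_2^n,X_3^n;Y_R^n,Y_S^n)$ and single-letterize. Using subadditivity of entropy $H(Y_{R,i},Y_{S,i}) \leq H(Y_{R,i}) + H(Y_{S,i})$ together with the channel factorization $H(Y_{R,i},Y_{S,i}|X_{1,i},X_{2,i},X_{3,i}) = H(Y_{R,i}|X_{3,i}) + H(Y_{S,i}|X_{1,i},X_{2,i})$, the per-letter term becomes $I(X_{3,i};Y_{R,i}) + I(X_{1,i},X_{2,i};Y_{S,i})$. Summing yields the second term in the min. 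Finally I would single-letterize both bounds via a uniform time-sharing variable $Q$; since $W_1 \perp W_2$ implies $X_1^n \perp X_2^n$, the joint distribution at each time factors, conditionally on $Q$, as $p(x_1|q)p(x_2|q)p(x_3|q)$, and the elementary ``max of average $\leq$ max'' argument replaces the time-sharing with an unconstrained maximization over product distributions $p(x_1)p(x_2)p(x_3)$. Letting $n\to\infty$ so that $\epsilon_n\to 0$ yields the claim.

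The main obstacle is the second cut: the outputs $Y_R$ and $Y_S$ are only conditionally (not marginally) independent, so one must invoke subadditivity of entropy on $H(Y_R,Y_S)$ to decouple the two channels. A naive manipulation that preserves the joint entropy $H(Y_R,Y_S)$ would fail to split the bound into the desired sum $I(X_3;Y_R) + I(X_1,X_2;Y_S)$. A secondary subtlety is making sure that the semi-orthogonality is applied consistently to prune $Y_R$ from the source-side cut, and that the final single-letter distribution is legitimately restricted to a product of marginals.
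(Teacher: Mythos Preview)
The paper does not give the proof here; it defers to the journal version \cite{Murin:IT11}. Your cut--set approach, together with the semi-orthogonal factorization, is exactly the natural route and almost certainly what the referenced proof does: the broadcast cut yields $I(X_1,X_2;Y_3,Y_S\mid X_3)$, which collapses to $I(X_1,X_2;Y_3,Y_S)$ because $(Y_3,Y_S)\perp X_3\mid (X_1,X_2)$; the multiple-access cut yields $I(X_1,X_2,X_3;Y_R,Y_S)$, which you correctly split via $H(Y_R,Y_S)\le H(Y_R)+H(Y_S)$ into $I(X_3;Y_R)+I(X_1,X_2;Y_S)$.

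Two points deserve care. First, your assertion that the per-letter law factors as $p(x_1\mid q)p(x_2\mid q)p(x_3\mid q)$ is not justified: $X_{3,i}$ is a function of $Y_3^{i-1}$ and can be correlated with $(X_{1,i},X_{2,i})$. This is harmless here, since after using semi-orthogonality neither bound involves the joint $(X_1,X_2)$--$X_3$ law, only the marginals $p(x_1,x_2)$ and $p(x_3)$; and $p(x_{1,i},x_{2,i})=p(x_{1,i})p(x_{2,i})$ does hold because the messages are independent.

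Second, and more substantively, the ``max of average $\le$ max'' step, applied separately to the two cuts, gives only
\[
R_1+R_2 \;\le\; \min\Bigl\{\max_{p(x_1)p(x_2)} I(X_1,X_2;Y_3,Y_S),\ \max_{p(x_1)p(x_2)p(x_3)}\bigl[I(X_3;Y_R)+I(X_1,X_2;Y_S)\bigr]\Bigr\},
\]
i.e.\ the min--max form. The proposition asserts the generally \emph{tighter} max--min form, where a \emph{single} product law must satisfy both inequalities. Going from the time-shared bound $\min\{E_Q f_1,\,E_Q(f_2+g)\}$ (with each $p^q$ a product) to $\max_{\text{product }p}\min\{f_1(p),f_2+g(p)\}$ is not an ``average $\le$ max'' triviality: the two maxima may be attained at different $q$'s, and a mixture of products is not a product, so the usual concavity argument does not land you back in the product class. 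Either retain a time-sharing variable in the statement, or supply the extra argument that eliminates it. For the paper's application this gap is immaterial, since only the first term of the minimum is invoked and for that single term your argument is complete.
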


\begin{proof}[$\mspace{-42mu}$ Proof]
    Detailed proof is provided in \cite[Subsection VI-A]{Murin:IT11}.
\end{proof}

		 Next, consider a SOMARC defined by
  \vspace{-0.25cm}
\begin{eqnarray}
    \mX_1 &=& \mX_2 = \mX_3 = \mY_3 = \mY_R = \{0,1\}, \quad \mY_S = \{0,1,2\}, \nonumber \\
    Y_R &=& X_3, \quad Y_3 = X_1 \oplus X_2, \quad Y_S = X_1 + X_2.
\label{eq:SOMARC_channel}
\end{eqnarray}
	\vspace{-0.6cm}

Additionally, consider the sources $(S_1,S_2) \in \{0,1\} \times \{0,1\}$ with the joint distribution $p(s_1,s_2)=\frac{1}{3}$ for $(s_1,s_2) \in \{ (0,0), (0,1), (1,1) \}$, and zero otherwise. Then,  $H(S_1,S_2)= \log_2 3 = 1.58$ $\text{bits} / \text{sample}$. 
For the channel defined in \eqref{eq:SOMARC_channel} the mutual information expression $I(X_1, X_2; Y_3, Y_S)$ reduces to $I(X_1, X_2; Y_S)$. This is because $I(X_1, X_2; Y_3,Y_S)=I(X_1, X_2; Y_S)+I(X_1, X_2; Y_3 |Y_S)$, and $Y_3$ is a deterministic function of $Y_S$. Therefore, 
	\vspace{-0.2cm}
\begin{align}
        R_1+R_2 & \leq 
				\max_{p(x_1)p(x_2)} I(X_1, X_2; Y_S) \nonumber \\
        & =  1.5 \text{ bits per channel use}.
    \end{align}
    \vspace{-0.6cm}

\noindent Hence, we have $H(S_1,S_2) > I(X_1,X_2;Y_S)$, for any $p(x_1)p(x_2)$. We conclude that it is not possible to send the sources reliably to the destination by using a separation-based source and channel codes. However, by choosing $X_1=S_1$ and $X_2=S_2$ a zero error probability is achieved. This example shows that {\em separate source and channel coding is, in general, suboptimal for sending arbitrarily correlated sources over MARCs}.

\vspace{-0.1cm}

\section{Joint Source-Channel Coding for Discrete Memoryless MARCs and MABRCs} \label{sec:JointAchiev}

\vspace{-0.08cm}

In this section we present two sets of sufficient conditions for reliable transmission of correlated sources over DM MARCs and MABRCs with side information.
Both achievability schemes use a combination of SW source coding, the CPM technique, and the DF scheme with successive decoding at the relay and backward decoding at the destination \cite{Kramer:2005}. We shall refer to SW source coding as separate source and channel coding (SSCC).
The achievability schemes differ in the way the source codes are combined. 
In the first scheme (\Thmref{thm:jointCond}) SSCC is used for encoding information to the destination while CPM is used for encoding information to the relay. In the second scheme (\Thmref{thm:jointCondFlip}), CPM is used for encoding information to the destination while SSCC is used for encoding information to the relay. 
\begin{theorem}
    \thmlabel{thm:jointCond}
    A source pair $(S_1^n,S_2^n)$ can be reliably transmitted over a DM MARC with relay and receiver side information as defined in Section \ref{sec:NotationModel} if,
    \vspace{-0.10cm}
    \begin{subequations} \label{bnd:Joint}
    \begin{eqnarray}
        H(S_1|S_2,W_3) &<& I(X_1;Y_3|S_2, X_2, V_1, X_3, W_3) \label{bnd:Joint_rly_S1} \\
        H(S_2|S_1,W_3) &<& I(X_2;Y_3|S_1, X_1, V_2, X_3, W_3) \label{bnd:Joint_rly_S2} \\
        H(S_1,S_2|W_3) &<& I(X_1,X_2;Y_3|V_1, V_2, X_3, W_3) \label{bnd:Joint_rly_S1S2} \\
        H(S_1|S_2,W) &<& I(X_1,X_3;Y|S_1, X_2, V_2) \label{bnd:Joint_dst_S1} \\
        H(S_2|S_1,W) &<& I(X_2,X_3;Y|S_2, X_1, V_1) \label{bnd:Joint_dst_S2} \\
        H(S_1,S_2|W) &<& I(X_1,X_2,X_3;Y|S_1,S_2), \label{bnd:Joint_dst_S1S2}
    \end{eqnarray}
    \end{subequations}
    \vspace{-0.5cm}

    \noindent for a joint distribution that factors as
    \vspace{-0.2cm}
    \begin{align}
        & p(s_1,s_2,w_3,w)p(v_1)p(x_1|s_1,v_1) \times \nonumber \\ 	
        & \qquad	p(v_2)p(x_2|s_2,v_2)p(x_3|v_1,v_2)p(y_3,y|x_1,x_2,x_3).
    \label{eq:JntJointDist}
    \end{align}
    \vspace{-0.6cm}
\end{theorem}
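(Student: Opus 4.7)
The plan is to construct a block Markov coding scheme that transmits $B$ source blocks in $B+1$ channel blocks of length $n$, combining the CPM technique of Cover--El Gamal--Salehi for the relay link with Slepian--Wolf binning for the destination link, under decode-and-forward with successive decoding at the relay and backward decoding at the destination.

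\emph{Codebook construction.} For $i=1,2$, randomly partition $\mS_i^n$ into $2^{nR_i'}$ bins via SW-binning, and denote the bin index of $s_i^n$ by $m_i(s_i^n)$. For each bin index $m_i$ draw the cooperation codeword $V_i^n(m_i)$ i.i.d.\ $\sim\prod p(v_{i,k})$, and for each pair $(m_1,m_2)$ draw the relay codeword $X_3^n(m_1,m_2)$ conditionally i.i.d.\ given $(V_1^n(m_1),V_2^n(m_2))$ from $\prod p(x_{3,k}|v_{1,k},v_{2,k})$. Finally, for each $(s_i^n,m_i)$ with $(s_i^n,V_i^n(m_i))\in\styp$, apply the CPM construction to pick $X_i^n(s_i^n,m_i)$ such that $(s_i^n,V_i^n(m_i),X_i^n(s_i^n,m_i))\in\styp$.

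\emph{Encoding and decoding.} In block $b$, transmitter $i$ sends $X_i^n(s_i^n(b),m_i(b-1))$, and the relay, having decoded $(s_1^n(b-1),s_2^n(b-1))$ at the end of block $b-1$, sends $X_3^n(m_1(b-1),m_2(b-1))$; initial indices $m_i(0)=1$ and a dummy last block close the chain at vanishing rate loss as $B\to\infty$. At the end of block $b$, the relay seeks the unique $(s_1^n(b),s_2^n(b))$ for which the tuple $(s_1^n(b), s_2^n(b), V_1^n(m_1(b-1)), V_2^n(m_2(b-1)), X_1^n, X_2^n, X_3^n(b), Y_3^n(b), W_3^n(b))$ is strongly typical; the three error events (wrong $s_1^n$, wrong $s_2^n$, both wrong) yield \eqref{bnd:Joint_rly_S1}--\eqref{bnd:Joint_rly_S1S2} after the CPM counting arguments. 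The destination decodes backwards: having recovered blocks $b+1,\ldots,B$, it first extracts $(m_1(b),m_2(b))$ from $Y^n(b+1)$ using joint typicality with the already-known sources of block $b+1$, and then SW-decodes $(s_1^n(b),s_2^n(b))$ from $(Y^n(b),W^n(b))$ within the bins indexed by $(m_1(b),m_2(b))$; the three SW-style error events then produce \eqref{bnd:Joint_dst_S1}--\eqref{bnd:Joint_dst_S1S2}.

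\emph{Main obstacle.} The hard part is the CPM error analysis at the relay: because $X_i^n$ is deterministically paired with $s_i^n$ through $V_i^n$, competing decoding events are not governed by an independent channel codebook and must instead be bounded by the Cover--El Gamal--Salehi enumeration of typical ``fake'' source sequences conditioned on the cooperation codewords, rather than by the usual i.i.d.\ argument. A secondary subtlety is that each $V_i^n$ plays a double role---as the CPM cloud center in the current block and as the carrier of the SSCC bin index read by the destination in the following block---so one must verify that the induced joint distribution factorizes as in \eqref{eq:JntJointDist} and that the forward relay decoding and the backward destination decoding remain consistent across adjacent blocks.
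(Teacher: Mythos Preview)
Your proposal is essentially the same scheme the paper uses: block-Markov DF with $V_i^n$ carrying the Slepian--Wolf bin index of the previous source block, $X_i^n$ generated via CPM conditional on $(s_i^n,V_i^n)$, successive CPM decoding of the source pair at the relay, and backward decoding at the destination that first recovers the bin indices from $Y^n(b{+}1)$ and then Slepian--Wolf decodes the sources. One small slip: in the destination step you should SW-decode $(s_1^n(b),s_2^n(b))$ from the bin indices and $W^n(b)$ only, not from $Y^n(b)$; the received signal of block $b$ is used in the \emph{next} backward step to extract $(m_1(b{-}1),m_2(b{-}1))$, and the stated destination constraints \eqref{bnd:Joint_dst_S1}--\eqref{bnd:Joint_dst_S1S2} arise precisely from combining the channel-decoding error for the bin indices in block $b{+}1$ with the pure SW error against $W^n(b)$.
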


\begin{proof}[$\mspace{-42mu}$ Proof]
    See \cite[Subsection VI.D, Appendix C]{Murin:IT11}.
\end{proof}   

\begin{theorem}
    \thmlabel{thm:jointCondFlip}
    A source pair $(S_1^n,S_2^n)$ can be reliably transmitted over a DM MARC with relay and receiver side information as defined in Section \ref{sec:NotationModel} if,
    \vspace{-0.2cm}
    \begin{subequations} \label{bnd:JointFlip}
    \begin{eqnarray}
        H(S_1|S_2,W_3) &<& I(X_1;Y_3|S_1, X_2, X_3) \label{bnd:JointFlip_rly_S1} \\
        H(S_2|S_1,W_3) &<& I(X_2;Y_3|S_2, X_1, X_3) \label{bnd:JointFlip_rly_S2} \\
        H(S_1,S_2|W_3) &<& I(X_1,X_2;Y_3|S_1, S_2, X_3 ) \label{bnd:JointFlip_rly_S1S2} \\
        H(S_1|S_2,W) &<& I(X_1,X_3;Y|S_2, X_2, W) \label{bnd:JointFlip_dst_S1} \\
        H(S_2|S_1,W) &<& I(X_2,X_3;Y|S_1, X_1, W) \label{bnd:JointFlip_dst_S2} \\
        H(S_1,S_2|W) &<& I(X_1,X_2,X_3;Y| W), \label{bnd:JointFlip_dst_S1S2}
    \end{eqnarray}
    \end{subequations}
    \vspace{-0.55cm}

    \noindent for a joint distribution that factors as
    \vspace{-0.2cm}
    \begin{align}
        &p(s_1,s_2,w_3,w)p(x_1|s_1)p(x_2|s_2) \times \nonumber \\
        & \qquad \quad p(x_3|s_1,s_2)p(y_3,y|x_1,x_2,x_3).
    \label{eq:JntFlipJointDist}
    \end{align}
		\vspace{-0.8cm}
\end{theorem}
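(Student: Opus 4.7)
I would prove the theorem by a standard block-Markov decode-and-forward scheme driven by a CPM codebook, with joint-typicality decoding at the relay and backward decoding at the destination. The transmission spans $B+1$ blocks of length $n$, with a prearranged flushing input used in block $B+1$. For the codebook I would draw independently: for every $s_1^n\in\mS_1^n$ a codeword $x_1^n(s_1^n)\sim\prod_{k=1}^n p(x_{1,k}|s_{1,k})$; for every $s_2^n\in\mS_2^n$ a codeword $x_2^n(s_2^n)\sim\prod_{k=1}^n p(x_{2,k}|s_{2,k})$; and for every pair $(s_1^n,s_2^n)$ a codeword $x_3^n(s_1^n,s_2^n)\sim\prod_{k=1}^n p(x_{3,k}|s_{1,k},s_{2,k})$, matching the factorization \eqref{eq:JntFlipJointDist}. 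In block $b$ source $i$ transmits $x_i^n(S_i^n(b))$, and the relay, having produced $(\hat S_1^n(b-1),\hat S_2^n(b-1))$ at the end of the previous block, transmits $x_3^n(\hat S_1^n(b-1),\hat S_2^n(b-1))$.

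At the end of block $b$ the relay performs joint-typicality decoding: declare the unique $(\hat s_1^n,\hat s_2^n)$ such that $(\hat s_1^n,\hat s_2^n,w_3^n(b),x_1^n(\hat s_1^n),x_2^n(\hat s_2^n),x_3^n(b),y_3^n(b))$ is jointly $\eps$-typical under \eqref{eq:JntFlipJointDist}. For the ``wrong $s_1$ only'' event, the number of competing $\tilde s_1^n$ that are conditionally typical with the true $(s_2^n,w_3^n)$ is at most $2^{n(H(S_1|S_2,W_3)+\eps)}$, while the probability that the freshly drawn codeword $x_1^n(\tilde s_1^n)$ is jointly typical with the remaining (true) coordinates is at most $2^{-n(I(X_1;Y_3|S_1,X_2,X_3)-\eps)}$. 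The reduction from $I(X_1;Y_3|S_1,S_2,W_3,X_2,X_3)$ to $I(X_1;Y_3|S_1,X_2,X_3)$ follows from the Markov chains $X_1-S_1-(S_2,W_3,X_2,X_3)$ and $(S_2,W_3)-(X_1,X_2,X_3)-Y_3$ read off from \eqref{eq:JntFlipJointDist}. A union bound then yields \eqref{bnd:JointFlip_rly_S1}, and the symmetric analyses of the remaining two error events yield \eqref{bnd:JointFlip_rly_S2}--\eqref{bnd:JointFlip_rly_S1S2}.

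For the destination I would use backward decoding. When decoding block $b$, with $(\hat s_1(b+1),\hat s_2(b+1))$ already in hand, the decoder declares the unique $(\hat s_1,\hat s_2)$ that is simultaneously jointly $\eps$-typical in block $b$ with $(w^n(b),x_1^n(\hat s_1),x_2^n(\hat s_2),y^n(b))$ under the corresponding marginal of \eqref{eq:JntFlipJointDist}, and in block $b+1$ with $(x_3^n(\hat s_1,\hat s_2),x_1^n(s_1(b+1)),x_2^n(s_2(b+1)),y^n(b+1),w^n(b+1))$ under its corresponding marginal. Since the two blocks are conditionally independent given their channel inputs, the exponents of the two typicality-probability upper bounds add for each incorrect candidate, and I would verify that after invoking the Markov structure of \eqref{eq:JntFlipJointDist} the resulting two-block mutual informations collapse into the single-letter expressions $I(X_1,X_3;Y|S_2,X_2,W)$, $I(X_2,X_3;Y|S_1,X_1,W)$ and $I(X_1,X_2,X_3;Y|W)$ on the right-hand sides of \eqref{bnd:JointFlip_dst_S1}--\eqref{bnd:JointFlip_dst_S1S2}. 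Matching these with the candidate-list sizes $2^{nH(S_1|S_2,W)}$, $2^{nH(S_2|S_1,W)}$ and $2^{nH(S_1,S_2|W)}$, then letting $n\to\infty$ (and $B\to\infty$ to amortize the flushing block), drives the error probability to zero.

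The main obstacle I anticipate is the destination step: one must carefully track which quantities are decoder-known (already-decoded sources together with their induced CPM codewords, plus the side information) versus random/unknown across the two adjacent blocks $b$ and $b+1$, and then verify that the two-block contributions telescope cleanly into the compact single-letter forms on the right-hand sides of \eqref{bnd:JointFlip_dst_S1}--\eqref{bnd:JointFlip_dst_S1S2}. The relay step, in contrast, is a comparatively routine joint-typicality argument once the Markov relations implied by \eqref{eq:JntFlipJointDist} are written down.
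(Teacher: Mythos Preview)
Your scheme is a pure CPM block-Markov construction with no binning, and this is where it breaks down. In block $b$ your sources transmit $x_i^n(S_i(b))$ while the relay transmits $x_3^n(S_1(b-1),S_2(b-1))$; hence within block $b$ the triple $(X_1,X_2)$ is generated from the \emph{current} sources and $X_3$ from the \emph{previous} (independent) sources. Consequently the true tuple $(S_1(b),S_2(b),W_3(b),X_1,X_2,X_3,Y_3)$ is \emph{not} jointly typical with respect to \eqref{eq:JntFlipJointDist}: under your encoding $X_3$ is independent of $(S_1,S_2)$, whereas \eqref{eq:JntFlipJointDist} demands $p(x_3|s_1,s_2)$. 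Your relay typicality test therefore fails for the correct pair, and even if you replace the test by typicality under the correct product law, the resulting error exponents are $I(\cdot\,;\cdot\,|\cdot)$ computed under that product law, not under \eqref{eq:JntFlipJointDist}, so you do not recover \eqref{bnd:JointFlip_rly_S1}--\eqref{bnd:JointFlip_rly_S1S2}. The same mismatch contaminates your two-block destination step: in block $b$ the marginal you need (with $X_3$ integrated out) is not the marginal of \eqref{eq:JntFlipJointDist}, and there is no reason the block-$b$ and block-$(b{+}1)$ exponents should ``telescope'' into $I(X_1,X_3;Y|S_2,X_2,W)$ etc., since the two blocks involve $X_3$ correlated with different source blocks than the accompanying $(X_1,X_2)$. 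This is precisely the obstruction the paper flags in its final Remark: a CPM-only DF scheme forces a joint-typicality test across sequences from different blocks.

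The paper's fix is the missing idea in your proposal: add a Slepian--Wolf binning layer. Each source bins $\svec_{i,b}\mapsto u_{i,b}$ and transmits $\xvec_i(u_{i,b},\svec_{i,b-1})$, i.e.\ the channel codeword is CPM-generated from the \emph{previous} block's source but carries the bin index of the \emph{current} block. Now all three inputs $\xvec_1,\xvec_2,\xvec_3$ in block $b$ are CPM-correlated with $(\svec_{1,b-1},\svec_{2,b-1})$, so the single-block tuple is typical under \eqref{eq:JntFlipJointDist}. The relay first channel-decodes the bin indices $(u_{1,b},u_{2,b})$ by the typicality test \eqref{eq:RelayJntFlipDecType} (which yields the SSCC-flavoured conditions \eqref{bnd:JointFlip_rly_S1}--\eqref{bnd:JointFlip_rly_S1S2} with $S_i$ in the conditioning), and then SW-source-decodes $(\svec_{1,b},\svec_{2,b})$ using $\wvec_{3,b}$. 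The destination backward-decodes $(\svec_{1,b},\svec_{2,b})$ from block $b{+}1$ \emph{alone} via \eqref{eq:DestFlipJntChanDecType} (not two blocks), which directly gives the CPM-style conditions \eqref{bnd:JointFlip_dst_S1}--\eqref{bnd:JointFlip_dst_S1S2}.
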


\begin{proof}[$\mspace{-42mu}$ Proof]

\subsubsection{Codebook construction}
For $i=1,2$, assign every $\svec_i \in \mS_i^n$ to one of $2^{nR_i}$ bins independently according to a uniform distribution on $\msgCal_i \triangleq \{1,2,\dots,2^{nR_i}\}$. Denote this assignment by $f_i, i=1,2$.

For $i=1,2$, for each pair $(\msg_i, \svec_i), \msg_i \in \msgCal_i, \svec_i \in \mS_i^n$, generate one $n$-length codeword $\xvec_i(\msg_i, \svec_i)$, by choosing the letters $x_{i,k}(\msg_i, \svec_i)$ independently with distribution $p_{X_i|S_i}(x_{i,k}|s_{i,k})$ for all $1 \leq k \leq n$.
Finally, generate one length-$n$ relay codeword $\xvec_3(\svec_1,\svec_2)$ for each pair $(\svec_1,\svec_2) \in \mS_1^n \times \mS_2^n$ by choosing $x_{3,k}(\svec_1,\svec_2)$ independently with distribution $p_{X_3|S_1,S_2}(x_{3,k}|s_{1,k},s_{2,k})$ for all $1 \leq k \leq n$.

\subsubsection{Encoding} 
%
Consider a source sequences of length $Bn$ $s^{Bn}_{i} \in \mS^{Bn}_i, i=1,2$. Partition each sequence into $B$ length-$n$ subsequences, $\svec_{i,b}, b=1,\dots,B$. Similarly, for $b=1,2,\dots,B$, partition the side information sequences $w_3^{Bn}$ and $w^{Bn}$ into $B$ length-$n$ subsequences $\wvec_{3,b}, \wvec_{b}$, respectively. We transmit a total of $Bn$ source samples over $B+1$ blocks of $n$ channel uses each.

At block $1$, source terminal $i, i=1,2$, observes $\svec_{i,1}$ and finds its corresponding bin index $\msg_{i,1} \in \msgCal_i$. It then transmits the channel codeword $\xvec_i(\msg_{i,1}, \avec_i)$ where $\avec_i \in \mS_i^n$ is a fixed sequence.
At block $b, b=2,\dots,B$, source terminal $i, i=1,2$, transmits the channel codeword $\xvec_i(\msg_{i,b},\svec_{i,b-1})$ where $\msg_{i,b} \in \msgCal_i$ is the bin index of source vector $\svec_{i,b}$.
At block $B+1$, source terminal $i, i=1,2$, transmits $\xvec_i(1,\svec_{i,B})$.

At block $b=1$, the relay transmits $\xvec_3(\avec_1,\avec_2)$.
Assume that at block $b, b=2,\dots,B,B+1$, the relay obtained the estimates $(\tsvec_{1,b-1},\tsvec_{2,b-1})$ of $(\svec_{1,b-1},\svec_{2,b-1})$. It then transmits the channel codeword $\xvec_3(\tsvec_{1,b-1},\tsvec_{2,b-1})$.

\subsubsection{Decoding}
The relay decodes the source sequences sequentially trying to reconstruct source block  $\svec_{i,b}, i=1,2$, at the end of channel block $b$ as follows: let $\tsvec_{i,b-1}, i=1,2,$ be the estimate of $\svec_{i,b-1}$, at the relay at the end of block $b-1$. Using this information, and its received signal $\yvec_{3,b}$, the relay channel decoder at time $b$ decodes $(\msg_{1,b}, \msg_{2,b})$, i.e., the bin indices corresponding to $\svec_{i,b}, i=1,2$, by looking for a unique pair $(\tmsg_{1}, \tmsg_{2})$ such that:
\vspace{-0.25cm}
\begin{align}
    & \big(\tsvec_{1,b-1}, \tsvec_{2,b-1}, \xvec_1(\tilde{\msg}_{1}, \tsvec_{1,b-1}), \xvec_2(\tilde{\msg}_{2}, \tsvec_{2,b-1}), \nonumber \\
    & \qquad \xvec_3(\tsvec_{1,b-1}, \tsvec_{2,b-1}), \yvec_{3,b}\big) \in \styp.
    \label{eq:RelayJntFlipDecType}
\end{align}
\vspace{-0.6cm}

The decoded bin indices, denoted $(\tilde{\msg}_{1,b}, \tilde{\msg}_{2,b})$, are then given to the relay source decoder.
Using $(\tilde{\msg}_{1,b}, \tilde{\msg}_{2,b})$ and the side information $\wvec_{3,b}$, the relay source decoder estimates
$(\svec_{1,b}, \svec_{2,b})$ by looking for a unique pair of sequences $(\tsvec_{1}, \tsvec_{2})$ that satisfies $f_1(\tilde{\svec}_{1})= \tilde{\msg}_{1,b}, f_2(\tilde{\svec}_{2})= \tilde{\msg}_{2,b}$ and $(\tilde{\svec}_{1},\tilde{\svec}_{2},\wvec_{3,b}) \in \styp(S_1,S_2,W_3)$. Let $(\tsvec_{1,b}, \tsvec_{2,b})$ denote the decoded sequences.

Decoding at the destination is done using backward decoding. The destination node waits until the end of channel block $B+1$.
It first decodes $\svec_{i,B}, i=1,2$, using the received signal at channel block $B+1$ and its side information $\wvec_{B}$.
Going backwards from the last channel block to the first, we assume that the destination has estimates $(\hsvec_{1,b+1},\hsvec_{2,b+1})$ of $(\svec_{1,b+1},\svec_{2,b+1})$ and consider decoding of $(\svec_{1,b},\svec_{2,b})$. From $(\hsvec_{1,b+1}, \hsvec_{2,b+1})$ the destination finds the corresponding bin indices $(\hmsg_{1,b+1}, \hmsg_{2,b+1})$. Using this information, its received signal $\yvec_{b+1}$ and the side information $\wvec_{b}$, the destination decodes $(\svec_{1,b},\svec_{2,b})$ by looking for a unique pair $(\hsvec_{1}, \hsvec_{2})$ such that:
\vspace{-0.3cm}
\begin{align}
    & \big(\hsvec_{1}, \hsvec_{2}, \xvec_1(\hmsg_{1,b+1}, \hsvec_{1}), \xvec_2(\hmsg_{2,b+1},\hsvec_{2}), \nonumber \\
    & \qquad \qquad \qquad \xvec_3(\hsvec_{1}, \hsvec_{2}), \wvec_{b}, \yvec_{b+1}\big) \in \styp.
\label{eq:DestFlipJntChanDecType}
\end{align}
\vspace{-0.6cm}

\subsubsection{Error probability analysis} The error probability analysis is detailed in \cite[Appendix D]{Murin:IT11}. $\qedhere$
\end{proof} 


\vspace{-0.1cm}
\section{Discussion} \label{sec:jointDiscussion}
\vspace{-0.1cm}

%

\begin{remark}
    Constraints \eqref{bnd:Joint_rly_S1}--\eqref{bnd:Joint_rly_S1S2} in \Thmref{thm:jointCond} and \eqref{bnd:JointFlip_rly_S1}--\eqref{bnd:JointFlip_rly_S1S2} in \Thmref{thm:jointCondFlip}, are due to decoding at the relay, while constraints \eqref{bnd:Joint_dst_S1}--\eqref{bnd:Joint_dst_S1S2} in \Thmref{thm:jointCond} and \eqref{bnd:JointFlip_dst_S1}--\eqref{bnd:JointFlip_dst_S1S2} in \Thmref{thm:jointCondFlip}, are due to decoding at the destination.
\end{remark}

\begin{remark}
	In \Thmref{thm:jointCond} $V_1$ and $V_2$ represent the binning information for $S_1$ and $S_2$, respectively. Observe that the left-hand side (LHS) of condition \eqref{bnd:Joint_rly_S1} is the entropy of $S_1$ when $(S_2,W_3)$ are known. 
On the right-hand side (RHS) of \eqref{bnd:Joint_rly_S1}, as $V_1$, $S_2$, $X_2$, $X_3$ and $W_3$ are given, the mutual information expression $I(X_1;Y_3|S_2, X_2, V_1, X_3, W_3)$ represents the available rate that can be used for sending to the relay information on the {\em source sequence} $S_1^n$, in excess of the bin index represented by $V_1$.
The LHS of condition \eqref{bnd:Joint_dst_S1} is the entropy of $S_1$ when $(S_2,W)$ are known. The RHS of condition \eqref{bnd:Joint_dst_S1} expresses the rate at which {\em binning information} can be transmitted reliably and cooperatively from transmitter 1 and the relay to the destination.
 This follows as the mutual information expression on the RHS of \eqref{bnd:Joint_dst_S1} can be written as $I(X_1,X_3;Y|S_1, X_2, V_2) = I(X_1,X_3;Y|S_1, S_2, V_2, X_2, W)$, which, as $S_1$, $S_2$ and $W$ are given, represents the rate for sending the {\em bin index} of source sequence $S_1^n$ to the destination (see \cite[Subsection VI-D]{Murin:IT11}).
This is in contrast to the decoding constraint at the relay, c.f. \eqref{bnd:Joint_rly_S1}.
Therefore, each mutual information expression in \eqref{bnd:Joint_rly_S1} and \eqref{bnd:Joint_dst_S1} represents a \emph{different} type of information sent by the source: either the source-channel codeword to the relay in \eqref{bnd:Joint_rly_S1}, or bin index to the destination in \eqref{bnd:Joint_dst_S1}.
This is because SSCC is used for sending information to the destination and CPM is used for sending information to the relay.

  In \Thmref{thm:jointCondFlip} the LHS of condition \eqref{bnd:JointFlip_rly_S1} is the entropy of $S_1$ when $(S_2,W_3)$ are known. 
In the RHS of \eqref{bnd:JointFlip_rly_S1} the mutual information expression $I(X_1;Y_3|S_1, X_2, X_3) = I(X_1;Y_3|S_1, S_2, X_2, X_3, W_3)$ represents the rate for sending the {\em bin index} of the source sequence $S_1^n$ to the relay (see \cite[Subsection VI-E]{Murin:IT11}). This is because $S_1$, $S_2$ and $W_3$ are given.
The LHS of condition \eqref{bnd:JointFlip_dst_S1} is the entropy of $S_1$ when $(S_2,W)$ are known. 
In the RHS of condition \eqref{bnd:JointFlip_dst_S1}, as $S_2$, $X_2$ and $W$ are given, the mutual information expression $I(X_1,X_3;Y|S_2, X_2,W)$ represents the available rate that can be used for sending information on the {\em source sequence} $S_1^n$ to the destination.
\end{remark}

\vspace{-0.15cm}

\begin{remark}
    For an input distribution
    \vspace{-0.2cm}
    \begin{align*}
        & p(s_1,s_2,w_3,w,v_1,v_2,x_1,x_2,x_3) = \nonumber \\
        & \qquad p(s_1,s_2,w_3,w)p(v_1)p(x_1|v_1)p(v_2)p(x_2|v_2)p(x_3|v_1,v_2),
    \end{align*}
    
    \vspace{-0.20cm}
    
     \noindent the conditions in \eqref{bnd:Joint} specialize to \cite[Equation (2)]{Murin:ISWCS11}, and the transmission scheme specializes to a separation-based achievability scheme.
\end{remark}

\begin{remark} \label{rem:JointReduce}
    In both \Thmref{thm:jointCond} and \Thmref{thm:jointCondFlip} the conditions stemming from the CPM technique can be specialized to the MAC source-channel conditions of \cite[Equations (12)]{Cover:80}.
    In \Thmref{thm:jointCond} letting $\mV_1= \mV_2= \mX_3= \mW_3=\phi$, reduces the relay conditions in \eqref{bnd:Joint_rly_S1}--\eqref{bnd:Joint_rly_S1S2} to the ones in \cite[Equations (12)]{Cover:80} with $Y_3$ as the destination.
    In \Thmref{thm:jointCondFlip} letting $\mX_3=\mW=\phi$, reduces the destination conditions in \eqref{bnd:Joint_dst_S1}--\eqref{bnd:Joint_dst_S1S2} to the ones in \cite[Equations (12)]{Cover:80} with $Y$ as the destination.
\end{remark}

\begin{remark} \label{rem:tradeoff}
    \Thmref{thm:jointCond} and \Thmref{thm:jointCondFlip} establish \emph{different} achievability conditions.
    As stated in Section \ref{subsec:exampleSepSubOpt}, SSCC is generally suboptimal for sending correlated sources over DM MARCs and MABRCs.
    In \Thmref{thm:jointCond} the CPM technique is used for sending information to the relay, while in \Thmref{thm:jointCondFlip} SSCC is used for sending information to the relay. This observation implies that the relay decoding constraints of \Thmref{thm:jointCond} are looser compared to the relay decoding constraints of \Thmref{thm:jointCondFlip}.
    Using similar reasoning we conclude that the destination decoding constraints of \Thmref{thm:jointCondFlip} are looser compared to the destination decoding constraints of \Thmref{thm:jointCond} (as long as coordination is possible, see Remark \ref{rem:CRBCrem}).
    Considering the distribution chains in \eqref{eq:JntJointDist} and \eqref{eq:JntFlipJointDist} we conclude that these two theorems represent different sets of sufficient conditions, and neither theorem is a special case of the other.
\end{remark}

\begin{remark} \label{rem:CRBCrem}
Figure \ref{fig:CBRCsideInfo} depicts the cooperative relay broadcast channel (CRBC) model with correlated relay and destination side information, which is a special case of the MABRC with $\mX_2=\mS_2=\phi$. For this model the optimal source-channel rate was obtained in \cite[Theorem 3.1]{ErkipGunduz:07}:

\begin{figure}[h]
    \vspace{-0.2cm}
    \centering
    \scalebox{0.4}{\includegraphics{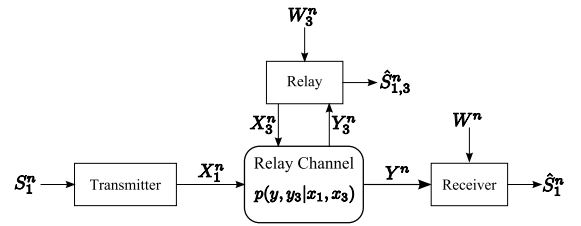}}
    \vspace{-0.1cm}
    \caption{Cooperative relay broadcast channel with correlated side information.}
    \label{fig:CBRCsideInfo}
    \vspace{-0.4cm}
\end{figure}

\begin{propositionA}[{\cite[Theorem 3.1]{ErkipGunduz:07}}] \label{cor:relayJoint}
   A source $S_1^n$ can be reliably transmitted over a DM CRBC with relay and receiver side information if
   \vspace{-0.33cm}
    \begin{subequations} \label{eq:relayJoint}
        \begin{eqnarray}
            H(S_1|W_3) &<& I(X_1;Y_3|X_3) \label{eq:RelayJnt_cond} \\
            H(S_1|W) &<& I(X_1,X_3;Y), \label{eq:DestJnt_cond}
        \end{eqnarray}
    \end{subequations}
   	\vspace{-0.65cm}
   
    \noindent for some input distribution $p(s_1, w_3, w)p(x_1, x_3)$. Conversely, if 
		a source $S_1^n$ can be reliably transmitted then the conditions in \eqref{eq:RelayJnt_cond} and \eqref{eq:DestJnt_cond} are satisfied with $<$ replaced by $\leq$ for some input distribution $p(x_1, x_3)$.
\end{propositionA}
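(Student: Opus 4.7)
The plan is to prove achievability through a block-Markov decode-and-forward (DF) scheme combined with Slepian-Wolf (SW) binning, and to prove the converse via Fano's inequality together with standard memoryless-channel manipulations. Because there is only one source in the CRBC, no correlation-preserving mapping is needed: a separation interface between the SW source code and a DF channel code already matches the cut-set upper bound.

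For achievability, I would first partition $\mS_1^n$ into $2^{nR}$ bins uniformly at random, and then operate the channel in $B+1$ blocks of $n$ uses each. In block $b$, the source forms the bin index $\msg_b$ of $\svec_{1,b}$ and transmits $\xvec_1(\msg_b,\msg_{b-1})$ drawn conditionally on $\xvec_3(\msg_{b-1})$ per $p(x_1|x_3)$, while the relay, having already recovered $\msg_{b-1}$, transmits $\xvec_3(\msg_{b-1})$. After block $b$, the relay decodes $\msg_b$ by joint typicality treating $\xvec_3$ as known, which requires $R<I(X_1;Y_3|X_3)$, and then SW-decodes $\svec_{1,b}$ within bin $\msg_b$ using $\wvec_{3,b}$, which requires $R>H(S_1|W_3)$. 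A feasible $R$ therefore exists whenever $H(S_1|W_3)<I(X_1;Y_3|X_3)$. The destination employs backward decoding: having already estimated $\hmsg_{b+1}$, it uses $\yvec_{b+1}$ to jointly decode $\hmsg_b$ (reliable if $R<I(X_1,X_3;Y)$) and then SW-decodes $\svec_{1,b}$ using $\wvec_b$ (reliable if $R>H(S_1|W)$), yielding the destination condition.

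For the converse, I fix any sequence of codes with $P_e^{(n)}\to 0$. Fano's inequality gives $H(S_1^n|Y_3^n,W_3^n)\le n\epsilon_n$ and $H(S_1^n|Y^n,W^n)\le n\epsilon_n$. Using $H(S_1^n|W_3^n)=nH(S_1|W_3)$, memorylessness of the channel, and the fact that $X_{3,k}$ is a function of $(Y_3^{k-1},W_3^n)$, the relay bound follows from
\begin{align*}
nH(S_1|W_3) &\le I(S_1^n;Y_3^n|W_3^n)+n\epsilon_n \\
            &\le \sum_{k=1}^n I(X_{1,k},X_{3,k};Y_{3,k}|W_3^n,Y_3^{k-1})+n\epsilon_n \\
            &\le \sum_{k=1}^n I(X_{1,k};Y_{3,k}|X_{3,k})+n\epsilon_n,
\end{align*}
combined with a time-sharing step (an auxiliary $Q$ uniform on $\{1,\ldots,n\}$ absorbed into the channel inputs). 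An analogous chain at the destination, retaining $W^n$ and dropping $Y_3^n$ from the conditioning, yields $H(S_1|W)\le I(X_1,X_3;Y)$ under the induced joint $p(x_1,x_3)$. The main obstacle is the careful treatment of the causal, side-information-dependent relay encoder when passing from multi-letter to single-letter expressions; once $X_{3,k}$ is absorbed into the conditioning on $(W_3^n,Y_3^{k-1})$, the remaining manipulations are routine.
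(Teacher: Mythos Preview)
Your converse sketch is sound; the single-letter reduction via Fano, memorylessness, the relay-causality identity $X_{3,k}=f(Y_3^{k-1},W_3^n)$, and a common time-sharing variable $Q$ does produce both inequalities for one induced $p(x_1,x_3)$.

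The achievability argument, however, has a genuine gap. You use a \emph{single} binning rate $R$ and require it to satisfy simultaneously
\[
H(S_1|W_3)<R<I(X_1;Y_3|X_3)\quad\text{and}\quad H(S_1|W)<R<I(X_1,X_3;Y).
\]
The existence of such an $R$ is equivalent to the \emph{coupled} condition
\[
\max\{H(S_1|W_3),H(S_1|W)\}<\min\{I(X_1;Y_3|X_3),\,I(X_1,X_3;Y)\},
\]
which is strictly stronger than the two decoupled inequalities \eqref{eq:RelayJnt_cond}--\eqref{eq:DestJnt_cond}. For instance, take $H(S_1|W_3)=0.1$, $I(X_1;Y_3|X_3)=0.2$, $H(S_1|W)=0.5$, $I(X_1,X_3;Y)=1$: both target inequalities hold, but no single $R$ exists. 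So ``a feasible $R$ therefore exists whenever $H(S_1|W_3)<I(X_1;Y_3|X_3)$'' is correct in isolation, yet the same $R$ must also serve the destination, and you never reconcile the two.

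The paper obtains achievability by specializing \Thmref{thm:jointCond} with $V_1=X_3$, $\mS_2=\mX_2=\mV_2=\phi$, and inputs independent of the source. The key structural difference from your scheme is that the relay there does \emph{not} first decode a bin index and then SW-decode; it decodes the source sequence $\svec_{1,b}$ directly by joint typicality of $(\svec_1,\xvec_1(\svec_1),\xvec_3,\wvec_{3,b},\yvec_{3,b})$, using $\wvec_{3,b}$ inside the channel-decoding step. Because there is a distinct codeword $\xvec_1(\svec_1)$ for every source sequence, the effective list the relay searches over has size $\approx 2^{nH(S_1|W_3)}$ after intersecting with the conditional typical set, and the resulting condition is exactly \eqref{eq:RelayJnt_cond} with no reference to $R$. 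The bin index (carried by $V_1=X_3$) is then used \emph{only} by the destination, so its rate can be tuned independently to satisfy \eqref{eq:DestJnt_cond}. An equivalent repair of your scheme is to generate one channel codeword per source sequence (rather than per bin) and let both relay and destination decode the source jointly with their side information; the explicit bin then becomes unnecessary.
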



The conditions in \eqref{eq:relayJoint} can also be obtained from \Thmref{thm:jointCond} by letting $V_1=X_3$, $\mathcal{S}_2 =\mathcal{X}_2 =\mathcal{V}_2=\phi$, and considering an input distribution independent of the sources. However, when we consider \Thmref{thm:jointCondFlip} with $\mathcal{S}_2=\mathcal{X}_2=\phi$, we obtain the following achievability conditions:
	\vspace{-0.2cm}
  \begin{subequations} \label{eq:relayFlipJoint}
    \begin{eqnarray}
        H(S_1|W_3) &<& I(X_1;Y_3|X_3,S_1) \label{eq:RelayFlipJnt_cond} \\
        H(S_1|W) &<& I(X_1,X_3;Y|W), \label{eq:DestFlipJnt_cond}
    \end{eqnarray}
  \end{subequations}
  \vspace{-0.63cm}
  
\noindent for some input distribution $p(s_1, w_3, w)p(x_1|s_1)p(x_3|s_1)$.

Note that the RHS of the inequalities in \eqref{eq:RelayFlipJnt_cond} and \eqref{eq:DestFlipJnt_cond} are not greater than the RHS of the inequalities in \eqref{eq:RelayJnt_cond} and \eqref{eq:DestJnt_cond}, respectively. Moreover, not all joint input distributions $p(x_1,x_3)$ can be achieved via $p(x_1|s_1)p(x_3|s_1)$. Hence, the conditions obtained from \Thmref{thm:jointCondFlip} for the CRBC setup are stricter than those obtained from \Thmref{thm:jointCond}, illustrating the fact that the two sets of conditions are not equivalent.
We conclude that the downside of using CPM to the destination as applied in this work is that it puts constraints on the distribution chain, thereby constraining the achievable coordination between the sources and the relay. For this reason, when there is only a single source, the joint distributions of the source and the relay ($X_1$ and $X_3$) achieved by the scheme of \Thmref{thm:jointCondFlip}, do not exhaust the entire space of joint distributions, resulting in generally stricter source-channel constraints than those obtained from \Thmref{thm:jointCond}.
However, recall that for SOMARC in Section \ref{subsec:exampleSepSubOpt} the optimal scheme uses CPM to the destination. Therefore, for the MARC it is not possible to determine whether either of the schemes is universally better than the other.
\end{remark}


\begin{remark}
    In both \Thmref{thm:jointCond} and \Thmref{thm:jointCondFlip} we use a combination of SSCC and CPM. Since CPM can generally support sources with higher entropies, a natural question that arises is whether it is possible
    to design a scheme based only on CPM; namely, encode both cooperation (relay) information and the new information, using a superposition CPM scheme. 
		This approach cannot be used directly in the framework of the current paper. Here, we use joint typicality decoder, which does not apply to different blocks generated independently with the same distribution.
    For example, we cannot test the joint typicality of $s_{1,1}^n$ and $s_{1,n+1}^{2n}$, as they belong to different time blocks. Using a CPM-only scheme would require such a test.
 We conclude that applying the CPM technique for sending information to both the relay and the destination cannot be done while using joint typicality decoder as considered in this paper. It is, of course, possible
 to construct schemes that use a different decoder, or apply CPM through intermediate RVs, which overcome this difficulty. Investigation of such coding schemes is left for future research.
\end{remark}


\vspace{-0.2cm}

\section{Conclusions} \label{sec:conclusions}

\vspace{-0.1cm}

In this paper we considered joint source-channel coding for DM MARCs and MABRCs.
We first showed via an explicit example that joint source-channel coding generally enlarges the set of possible sources that can be reliably transmitted compared to separation-based coding. We then derived two new joint source-channel achievability schemes.
Both schemes use a combination of SSCC and CPM techniques.
While in the first scheme CPM is used for encoding information to the relay and SSCC is used for encoding information to the destination, in the second scheme SSCC is used for encoding information to the relay and CPM is used for encoding information to the destination. The different combinations of binning and source mapping enable flexibility in the system design by choosing one of the two schemes according to the quality of the side information and received signals at the relay and at the destination. In particular, the first scheme has looser decoding constraints at the relay and is therefore better when the channels from the sources to the relay are the bottleneck; while the second scheme has looser decoding constraints at the destination, and is more suitable for scenarios in which the channels to the destination are more noisy (at the cost of more constrained source-relay coordination).

\vspace{-0.1cm}

\end{document}